\documentclass[twoside]{article}

\usepackage[dvips]{graphicx}
\usepackage[utf8]{inputenc}
\usepackage[english,russian]{babel}
\usepackage{amsmath,amsthm,amssymb}
\usepackage{euscript}

\PassOptionsToPackage{lowtilde}{url}
\RequirePackage[%
    colorlinks=true,
    pdfstartview=FitH,
    linkcolor=blue,
    citecolor=blue,
    filecolor=blue,
    urlcolor=blue,
    linktoc=page,
    pdftitle={On Gate Complexity of Reversible Circuits Consisting of NOT, CNOT and 2-CNOT Gates},
    pdfauthor={Dmitry V. Zakablukov},
    pdfkeywords={reversible logic, gate complexity, computations with memory},
    pdfsubject={The paper discusses the gate complexity of reversible circuits consisting of NOT, CNOT and 2-CNOT gates}]{hyperref}

\sloppy

\newcommand{\mcirc}{\mathop{\circ}\limits}
\newcommand{\ZZ}{\mathbb Z}
\newcommand{\frS}{\mathfrak S}
\newcommand{\vv}{\mathbf}

\title{О сложности обратимых схем, состоящих из функциональных элементов NOT, CNOT и 2-CNOT}
\author{Д.\,В.~Закаблуков}

\theoremstyle{plain}
\newtheorem{theorem}{Теорема}
\newtheorem{lemma}{Лемма}

\newtheorem{assertion}{Утверждение}

\theoremstyle{definition}
\newtheorem{definition}{Определение}

\begin{document}

\maketitle

\begin{abstract}
В работе рассматривается вопрос сложности обратимых схем, состоящих из функциональных элементов NOT, CNOT и 2-CNOT.
Определяется функция Шеннонa $L(n, q)$ сложности обратимой схемы, реализующей отображение $f\colon \ZZ_2^n \to \ZZ_2^n$,
как функция от $n$ и количества дополнительных входов схемы $q$.
Доказывается общая нижняя оценка сложности обратимой схемы $L(n,q) \geqslant \frac{2^n(n-2)}{3\log_2(n+q)} - \frac{n}{3}$.
Доказывается верхняя оценка сложности $L(n,0) \leqslant 3n2^{n+4}(1+o(1)) \mathop / \log_2n$
в случае отсутствия дополнительных входов.
Доказывается асимптотическая верхняя оценка сложности $L(n,q_0) \lesssim 2^n$
в случае использования $q_0 \sim n2^{n-o(n)}$ дополнительных входов.
\end{abstract}

\textbf{Ключевые слова}: обратимые схемы, сложность схемы, вычисления с памятью.

\section*{Введение}
В дискретной математике нередко возникает задача оценить сложность того или иного преобразования.
Теория схемной сложности берет свое начало с работы Шеннона~\cite{shannon}. В ней он предложил в качестве меры сложности
булевой функции рассматривать сложность реализующей ее минимальной контактной схемы.
На сегодняшний день известна асимптотическая оценка сложности $L(n) \sim 2^n \mathop / n$
булевой функции~\cite{yablonsky} в базисе классических функциональных элементов <<инвертор, дизъюнктор,
конъюнктор>>.

В работе~\cite{karpova} рассматривается вопрос о вычислениях с ограниченной памятью. Было доказано, что в базисе
всех $p$-местных булевых функций нижняя асимптотическая оценка сложности схемы, состоящей из функциональных элементов,
соответствующих этим функциям, зависит только от параметра $p$ и никак не зависит от количества используемых регистров памяти.
Более того, было показано, что любую булеву функцию можно реализовать схемой, использующей не более двух регистров памяти.

В данной работе рассматриваются схемы, состоящие из обратимых функциональных элементов NOT, CNOT и 2-CNOT.
Определение таких функциональных элементов и схем было дано, например,
в работах~\cite{feynman,maslov_thesis,my_equivalent_replacements}.
Известно, что обратимая схема с $n \geqslant 4$ входами,
состоящая из функциональных элементов NOT, CNOT и 2-CNOT (далее просто обратимая схема), задает четную подстановку на множестве
$\ZZ_2^n$ \cite{shende, my_research}.
Поэтому в качестве меры сложности четной подстановки можно рассматривать сложность задающей ее
минимальной обратимой схемы.

В данной работе рассматривается множество $F(n,q)$ всех отображений $\ZZ_2^n \to \ZZ_2^n$, которые могут быть реализованы
обратимой схемой с $(n+q)$ входами (дополнительной памяти).
Определяется функция Шеннона сложности обратимой схемы $L(n,q)$, как функция от $n$ и количества дополнительных входов схемы $q$.
Показывается, что сложность обратимой схемы, в отличие от обычных схем,
существенно зависит от количества дополнительных входов (аналог регистров памяти~\cite{karpova}).

При помощи мощностного метода Риордана-Шеннона доказывается нижняя оценка сложности обратимой схемы:
$L(n,q) \geqslant \frac{2^n(n-2)}{3\log_2(n+q)} - \frac{n}{3}$.
Дается описание алгоритма синтеза обратимой схемы без использования дополнительных входов,
при помощи которого доказывается верхняя оценка $L(n,0) \leqslant 3n2^{n+4}(1+o(1)) \mathop / \log_2n$.
Также предлагается аналог метода Лупанова~\cite{yablonsky} для синтеза обратимых схем с дополнительными входами,
при помощи которого доказывается верхняя асимптотическая оценка $L(n,q_0) \lesssim 2^n$ при $q_0 \sim n2^{n-o(n)}$.

\section{Основные понятия}

Определение обратимых функциональных элементов было введено, к примеру, в работе Фейнмана~\cite{feynman},
определения обратимых элементов NOT и $k$-CNOT были даны, к примеру, в работе~\cite{maslov_thesis}.
Mы будем пользоваться формальным определением этих функциональных элементов из работы~\cite{my_equivalent_replacements}.

Напомним, что через $N_j^n$ обозначается функциональный элемент NOT (инвертор) с $n$ входами,
задающий преобразование $\ZZ_2^n \to \ZZ_2^n$ вида
\begin{equation}
    f_j(\langle x_1, \ldots, x_n \rangle) = \langle x_1, \ldots, x_j \oplus 1, \ldots, x_n \rangle  \; .
    \label{formula_not_definition}
\end{equation}
Через $C_{i_1,\ldots,i_k;j}^n = C_{I;j}^n$, $j \notin I$, обозначается функциональный элемент $k$-CNOT с $n$ входами
(контролируемый инвертор, обобщенный элемент Тоффоли с $k$ контролирующими входами),
задающий преобразование $\ZZ_2^n \to \ZZ_2^n$ вида
\begin{equation}
    f_{i_1,\ldots,i_k;j}(\langle x_1, \ldots, x_n \rangle) =
        \langle x_1, \ldots, x_j \oplus x_{i_1} \wedge \ldots \wedge x_{i_k}, \ldots, x_n \rangle  \; .
    \label{formula_k_cnot_definition}
\end{equation}
Если значение $n$ ясно из контекста, будем опускать верхний индекс $n$ в обозначении функциональных элементов NOT и $k$-CNOT.
Далее будут рассматриваться только функциональные элементы NOT, CNOT (1-CNOT) и 2-CNOT.
Обозначим через $\Omega_n^2$ множество всех функциональных элементов NOT, CNOT и 2-CNOT с $n$ входами.

Классически схема из функциональных элементов определяется как ориентированный граф без циклов с помеченными ребрами и вершинами.
В случае обратимых схем данную модель можно упростить, т.\,к. в обратимой схеме запрещено ветвление входов и выходов
функциональных элементов, а также произвольное подключение выходов одного функционального элемента ко входам другого
функционального элемента. Поэтому в ориентированном графе, описывающем обратимую схему $\frS$, все вершины,
соответствующие функциональным элементам, имеют ровно $n$ занумерованных входов и выходов.
Все эти вершины нумеруются от 1 до $l$, при этом $i$-й выход $m$-й вершины, $m < l$,
соединяется только с $i$-м входом $(m+1)$-й вершины.
Входы 1-й вершины являются входами обратимой схемы, выходы $l$-й вершины~"--- ее выходами.
Такое соединение функциональных элементов из множества $\Omega_n^2$ друг с другом далее будем называть композицией
функциональных элементов. Величина $l=L(\frS)$ равна сложности обратимой схемы $\frS$.

Можно приписать $i$-м входам и выходам вершин графа символ $r_i$ из множества $R = \{\,r_1, \ldots, r_n\,\}$,
каждый из которых можно интерпретировать как имя регистра памяти (номер ячейки памяти),
в котором хранится часть результата работы схемы. Из формул~\eqref{formula_not_definition}
и~\eqref{formula_k_cnot_definition} видно, что в этом случае после работы какого-либо элемента схемы инвертируется
значение не более, чем в одном регистре памяти. В этом заключается существенная разница между схемами,
состоящими из обратимых и необратимых функциональных элементов.

\section{Сложность обратимой схемы}

В данном разделе будет сформулирован основной результат работы без доказательства для сложности
обратимой схемы с $n$ входами. Доказательство приведенных оценок будет дано в следующих разделах.

Обратимая схема с $n \geqslant 4$ входами задает четную подстановку на множестве $\ZZ_2^n$ \cite{shende, my_research}.
При этом она может также реализовывать некоторое булево отображение $\ZZ_2^m \to \ZZ_2^k$, где $m, k \leqslant n$,
с использованием или без использования дополнительных входов. Для пояснения этого введем следующие отображения:
\begin{enumerate}

    \item
        \textit{Расширяющее} отображение $\phi_{n,n+k}\colon \mathbb Z_2^n \to \mathbb Z_2^{n+k}$ вида
        $$
            \phi_{n,n+k}( \langle x_1, \ldots, x_n \rangle ) = \langle x_1, \ldots, x_n, 0, \ldots, 0 \rangle  \; .
        $$

    \item
        \textit{Редуцирующее} отображение $\psi_{n+k,n}^\pi\colon \mathbb Z_2^{n+k} \to \mathbb Z_2^n$ вида
        $$
            \psi_{n+k,n}^\pi( \langle x_1, \ldots, x_{n+k} \rangle ) =
            \langle x_{\pi(1)}, \ldots, x_{\pi(n)} \rangle  \; ,
        $$
        где $\pi$~"--- подстановка на множестве $\mathbb Z_{n+k}$.

\end{enumerate}

Введем формальное определение обратимой схемы, реализующей произвольное отображение $f\colon \ZZ_2^n \to \ZZ_2^n$
с использованием дополнительных входов.
\begin{definition}\label{label_define_implementation_of_transformation}
    Обратимая схема $\frS_g$ с $(n+q)$ входами, задающая преобразование $g\colon \mathbb Z_2^{n+q} \to \mathbb Z_2^{n+q}$,
    реализует отображение $f$ c использованием $q \geqslant 0$ дополнительных входов
    (дополнительной памяти), если существует такая подстановка $\pi \in S(\mathbb Z_{n+q})$, что
    $$
        \psi_{n+q,n}^\pi(g( \phi_{n,n+q}(\vv x))) = f(\vv x), \vv x \in \mathbb Z_2^n  \; .
    $$
\end{definition}
Отметим, что в данной терминологии выражения \textit{реализует} и \textit{задает} отображение имеют разные значения:
если обратимая схема $\frS_g$ задает отображение $f$, то $g(\vv x) = f(\vv x)$.
Если схема $\frS_g$ реализует отображение $f$ и имеет ровно $n$ входов, то будем говорить, что она реализует
данное отображение \textit{без использования дополнительной памяти}.

Обозначим через $P_2(n,n)$ множество всех булевых отображений $\ZZ_2^n \to \ZZ_2^n$.
Обозначим через $F(n,q) \subseteq P_2(n,n)$ множество всех отображений $\ZZ_2^n \to \ZZ_2^n$, которые могут быть реализованы
обратимой схемой с $(n+q)$ входами.
Множество подстановок из $S(\ZZ_2^n)$, задаваемых всеми элементами множества $\Omega_n^2$,
генерирует знакопеременную $A(\ZZ_2^n)$ и симметрическую $S(\ZZ_2^n)$ группы подстановок при $n > 3$ и $n \leqslant 3$,
соответственно~\cite{shende, my_research}.
Отсюда следует, что $F(n,0)$ совпадает с множеством отображений, задаваемых всеми подстановками из $A(\ZZ_2^n)$
и $S(\ZZ_2^n)$ при $n > 3$ и $n \leqslant 3$, соответственно.
С другой стороны, несложно показать, что при $q \geqslant n$ верно равенство $F(n,q) = P_2(n,n)$:
при наличии $n$ дополнительных входов всегда можно построить биекцию $g\colon \ZZ_2^{n+q} \to \ZZ_2^{n+q}$,
удовлетворяющую Определению~\ref{label_define_implementation_of_transformation}
и задающую четную подстановку $h \in A(\ZZ_2^{n+q})$.

Обозначим через $L(f,q)$ сложность минимальной обратимой схемы, состоящей из функциональных элементов
множества $\Omega_{n+q}^2$ и реализующей булево отображение $f \in F(n,q)$ с использованием $q$ дополнительных входов.
Определим функцию Шеннона $L(n,q)$ сложности обратимой схемы следующим образом:
$$
    L(n,q) = \max_{f \in F(n,q)} {L(f,q)}  \; .
$$

Теперь сформулируем основной результат данной работы.
\begin{theorem}[нижняя оценка сложности обратимой схемы]\label{theorem_complexity_lower_bound}
    Верно неравенство
    $$
        L(n,q) \geqslant \frac{2^n(n-2)}{3\log_2(n+q)} - \frac{n}{3}  \; .
    $$
\end{theorem}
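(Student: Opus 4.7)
Воспользуюсь стандартным мощностным методом Риордана-Шеннона: сопоставлю количество возможных обратимых схем сложности не более $L$ с мощностью множества $F(n,q)$.

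Сначала оценю сверху число реализуемых отображений. В множестве $\Omega_{n+q}^2$ содержится не более $(n+q)^3$ функциональных элементов, поэтому число различных композиций ровно $L$ элементов не превосходит $(n+q)^{3L}$, а количество всех схем сложности не более $L$ есть сумма геометрической прогрессии и оценивается сверху величиной $2(n+q)^{3L}$. Поскольку в Определении~\ref{label_define_implementation_of_transformation} существенны только значения $\pi(1),\ldots,\pi(n)$, количество различных существенных значений подстановки $\pi$ ограничено величиной $(n+q)!/q! \leqslant (n+q)^n$. Следовательно, $|F(n,q)| \leqslant 2(n+q)^{n+3L(n,q)}$.

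Снизу мощность $|F(n,q)|$ оценю через включение $A(\ZZ_2^n) \subseteq F(n,0) \subseteq F(n,q)$: первое включение следует из упомянутого в тексте факта о порождении знакопеременной группы элементами $\Omega_n^2$ при $n \geqslant 4$, второе~"--- из тривиального расширения схемы с $n$ входами до схемы с $n+q$ входами с неиспользуемыми дополнительными проводниками и $\pi$, тождественной на $\{1,\ldots,n\}$. Отсюда $|F(n,q)| \geqslant (2^n)!/2$. По формуле Стирлинга $\log_2((2^n)!/2) \geqslant 2^n(n-\log_2 e) - 1$, а так как $\log_2 e < 2$, правая часть превосходит $2^n(n-2)$ при всех $n \geqslant 1$. Сопоставляя обе оценки и логарифмируя, получаю $2^n(n-2) \leqslant (n + 3L(n,q))\log_2(n+q) + O(1)$, откуда элементарное деление и перегруппировка слагаемых дают искомое неравенство.

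Ожидаю, что основная техническая сложность~"--- аккуратное согласование констант, чтобы аддитивная поправка оказалась в точности $-n/3$, а не $-(n+O(1))/3$. Устранение лишних слагаемых достигается либо более точной оценкой $|\Omega_{n+q}^2| = (n+q)+(n+q)(n+q-1)+(n+q)(n+q-1)(n+q-2)/2 \sim (n+q)^3/2$, что уменьшает оценку числа схем, либо использованием более точной формы Стирлинга. Случай малых $n$ (в частности, $n \leqslant 3$), когда $F(n,0)$ совпадает со всей симметрической группой на $\ZZ_2^n$, лишь усиливает нижнюю оценку мощности $|F(n,q)|$ и не представляет дополнительных трудностей; кроме того, при малых $n$ правая часть утверждения может быть отрицательной и неравенство выполнено тривиально.
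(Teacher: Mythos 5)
Ваше предложение верно и по существу совпадает с доказательством в статье: тот же мощностной метод Риордана--Шеннона с оценкой числа схем через $|\Omega_{n+q}^2|\leqslant (n+q)^3$ (в статье используется чуть более точное $(n+q)^3/2$), умножением на $A_{n+q}^n$ для учёта редуцирующих подстановок, нижней оценкой $|F(n,q)|\geqslant (2^n)!/2$ и неравенством $(2^n)!\geqslant (2^n/e)^{2^n}$. Отмеченная вами забота об аддитивных константах действительно снимается либо уточнением оценки $|\Omega_{n+q}^2|$, либо тривиальностью утверждения при $n\leqslant 2$, так что пробелов нет.
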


\begin{theorem}[о сложности обратимой схемы без дополнительных входов]\label{theorem_complexity_upper_no_memory}
    Верно неравенство
    $$
        L(n, 0) \leqslant \frac{3n2^{n+4}}{\log_2 n - \log_2 \log_2 n - \log_2 \phi(n)}
            \left( 1 + \varepsilon(n) \right)  \; ,
    $$
    где $\phi(n) < n \mathop / \log_2 n$~"--- любая сколь угодно медленно растущая функция,
    а функция $\varepsilon(n)$ равна:
    $$
        \varepsilon(n) = \frac{1}{6\phi(n)} + \left(\frac{8}{3} + o(1)\right)
            \frac{\log_2 n \cdot \log_2 \log_2 n}{n}  \; .
    $$
\end{theorem}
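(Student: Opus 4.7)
The plan is to give an explicit synthesis algorithm that, for every even permutation $h \in A(\ZZ_2^n)$, produces a reversible circuit of gates from $\Omega_n^2$ realizing $h$ on the original $n$ inputs, and then to count the gates carefully.

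First, I would reduce synthesis of $h$ to a block-decomposed form. Fix the chunk size $k = \log_2 n - \log_2\log_2 n - \log_2 \phi(n)$, so that $2^k = n / (\log_2 n \cdot \phi(n))$. Split the $n$ coordinates into a ``data block'' of $k$ coordinates and an ``address block'' of the remaining $n-k$ coordinates. The $2^{n-k}$ layers obtained by fixing the address form a partition of $\ZZ_2^n$, and $h$ can be factored (after a routing sub-circuit handling pairs of layers that $h$ maps to one another) as $2^{n-k}$ layer permutations, each acting on a single layer of $2^k$ vectors, interleaved with a fixed address-enumeration cascade. This is the reversible analog of the block decomposition used in the Lupanov method for Boolean circuits.

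Second, I would bound the cost of realizing all the layer permutations. An arbitrary permutation of one layer, viewed as a permutation of $\ZZ_2^n$ fixing everything outside the layer, decomposes into at most $2^k$ transpositions of data-block values, each of which reduces (via auxiliary NOT gates) to a $k$-CNOT conditioned on the address lines; the ancilla-free decomposition of a $k$-CNOT in $\Omega_n^2$ costs $O(k)$ elementary gates. A naive sum would give $O(n\cdot 2^n)$; the factor $k$ missing from the denominator is recovered by amortization — the address decoder and the $k$-CNOT expansion can be shared across many transpositions, lowering the amortized per-vector cost to $O(n/k)$. Tracking precise constants in the $k$-CNOT$\to\Omega_n^2$ decomposition and summing over all $2^{n-k}$ layers yields the main term $\dfrac{3 n \cdot 2^{n+4}}{k}$.

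Third, all lower-order contributions are collected into $\varepsilon(n)$: the address-decoder overhead and the cost of un-computing temporary sub-circuits produce the $1/(6\phi(n))$ term, while rounding $k$ to an integer together with finite-size corrections in the cycle decomposition give the $(8/3 + o(1))\log_2 n \cdot \log_2\log_2 n / n$ term. The hard part, I expect, will be the amortization step in the absence of ancillae: a Lupanov-style universal lookup table naturally wants a scratch register, yet here every intermediate bit must live on one of the existing $n$ registers. Realizing each layer permutation in-place, undoing all temporary ``garbage'' before the next layer is processed, and still sharing the address decoder and the $k$-CNOT gadgets across layers, is the delicate ingredient that fixes the specific constants appearing in the theorem, and it is also why the bound does not yet reach the $\sim 2^n$ asymptotics achievable with ample additional inputs.
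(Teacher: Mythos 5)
Your plan replaces the paper's construction with a Lupanov-style coordinate split (a $k$-bit <<data block>> and an $(n-k)$-bit <<address block>>, with $h$ factored through $2^{n-k}$ layer permutations). The paper never partitions the coordinates at all: it decomposes $h$ into independent transpositions, groups them into batches of $K$ transpositions ($k=2K$ moved points), and realizes each batch by conjugating the $k\times n$ matrix of moved points with NOT/CNOT/2-CNOT gates until the whole batch collapses into a single $(n-\log_2 k)$-controlled NOT, at a cost of $12n + k2^{k+1} + 32k\log_2 k - 10\log_2 k$ gates per batch. That is where the $\log_2 n$ in the denominator comes from: about $2^{n+1}/k$ batches times roughly $12n$ gates each, with $k$ a power of two near $\log_2 n - \log_2\log_2 n - \log_2\phi(n)$ chosen so that $k2^{k+1}$ stays of order $n/\phi(n)$.

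Two steps of your argument are genuine gaps, not merely omitted details. First, the layer decomposition itself: an arbitrary $h\in A(\ZZ_2^n)$ does not preserve the layers, and the <<routing sub-circuit handling pairs of layers that $h$ maps to one another>> is essentially as hard as the original problem unless you invoke and then actually implement and cost a three-stage (within-layer / across-layer / within-layer) decomposition; you do neither. Second, and more seriously, the amortization that is supposed to recover the factor $k$ in the denominator is only asserted. Sharing an address decoder or a $k$-CNOT gadget <<across many transpositions>> requires the decoder's output to live somewhere between uses, and in the ancilla-free setting there is no register to put it on --- this is exactly the obstacle you flag yourself in your closing paragraph as the <<delicate ingredient>>. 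Since that ingredient is the entire content of the theorem (without it you only recover an $O(n2^n)$ bound of the type~\eqref{formula_simple_upper_bound_no_memory}), and since the constants $3\cdot 2^{4}$, $1/(6\phi(n))$ and $8/3$ are attributed to an unspecified construction, the proposal is not yet a proof. For what it is worth, the paper's conjugation trick answers your own open question: it amortizes by letting one multi-controlled NOT act on all $k$ moved points of a batch at once, while the conjugating gates play the role of your shared <<decoder>> and are uncomputed for free, because conjugation by a self-inverse gate just appends that gate at both ends of the subcircuit.
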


\begin{theorem}\label{theorem_complexity_upper_no_memory_common}
    Верно соотношение
    $$
        L(n,0) \asymp \frac{n2^n}{\log_2n}  \; .
    $$
\end{theorem}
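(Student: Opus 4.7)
The plan is to derive Theorem~\ref{theorem_complexity_upper_no_memory_common} as a direct corollary of the two preceding theorems. Since $\asymp$ asserts equality up to positive constants, it suffices to establish matching $\Omega$ and $O$ bounds on $L(n,0)$, and both are already essentially packaged in Theorems~\ref{theorem_complexity_lower_bound} and~\ref{theorem_complexity_upper_no_memory}.

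For the lower direction, I would simply substitute $q=0$ into Theorem~\ref{theorem_complexity_lower_bound} to obtain
$$
    L(n,0) \geqslant \frac{2^n(n-2)}{3\log_2 n} - \frac{n}{3} = \frac{n 2^n}{3 \log_2 n}\bigl(1 + o(1)\bigr),
$$
which yields $L(n,0) = \Omega(n 2^n / \log_2 n)$.

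For the upper direction, the task is to specialise Theorem~\ref{theorem_complexity_upper_no_memory} by a legitimate choice of the slowly growing function $\phi(n)$ so that the denominator in its bound behaves like $\log_2 n (1 - o(1))$ and the error $\varepsilon(n)$ vanishes. A convenient choice is $\phi(n) = \log_2 n$: for sufficiently large $n$ one has $\log_2 n < n / \log_2 n$, so $\phi$ is admissible. With this choice $\log_2 \phi(n) = \log_2 \log_2 n$, hence the denominator equals
$$
    \log_2 n - 2\log_2 \log_2 n = \log_2 n \bigl(1 - o(1)\bigr),
$$
while $\varepsilon(n) = \tfrac{1}{6\log_2 n} + (\tfrac{8}{3}+o(1)) \tfrac{\log_2 n \log_2 \log_2 n}{n} = o(1)$. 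Plugging back gives
$$
    L(n,0) \leqslant \frac{48\, n\, 2^n}{\log_2 n}\bigl(1 + o(1)\bigr),
$$
i.e.\ $L(n,0) = O(n 2^n / \log_2 n)$.

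Combining the two bounds yields $L(n,0) \asymp n 2^n / \log_2 n$. There is no real obstacle here: the only item requiring any care is verifying that the choice $\phi(n) = \log_2 n$ is admissible and that both summands defining $\varepsilon(n)$ are indeed $o(1)$, which is immediate.
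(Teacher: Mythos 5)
Ваше доказательство корректно и совпадает по сути с доказательством в статье, которое также выводит утверждение непосредственно из Теорем~\ref{theorem_complexity_lower_bound} и~\ref{theorem_complexity_upper_no_memory}; вы лишь аккуратно проговариваете подстановку $q=0$ и выбор допустимой функции $\phi(n)=\log_2 n$, что в статье опущено как очевидное. Замечаний нет.
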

\begin{proof}
    Следует из Теорем~\ref{theorem_complexity_lower_bound} и~\ref{theorem_complexity_upper_no_memory}.
\end{proof}

\begin{theorem}\label{theorem_complexity_upper_with_memory}
    Верно соотношение
    $$
        L(n,q_0) \lesssim 2^n \text{ \,при\, } q_0 \sim n 2^{n-\lceil n \mathop / \phi(n)\rceil} \; ,
    $$
    где $\phi(n) \leqslant n \mathop / (\log_2 n + \log_2 \psi(n))$
    и $\psi(n)$~"--- любые сколь угодно медленно растущие функции.
\end{theorem}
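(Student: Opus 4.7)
The plan is to adapt Lupanov's asymptotically optimal synthesis of Boolean operators to the reversible setting, using the $q_0 \sim n\cdot 2^{n-k}$ extra wires as ancilla storage. Throughout I write $k = \lceil n/\phi(n)\rceil$ and split the input as $y = (x_1,\ldots,x_k)$, $z = (x_{k+1},\ldots,x_n)$; since both $\phi$ and $\psi$ grow to infinity, the hypothesis $\phi(n) \leq n/(\log_2 n + \log_2 \psi(n))$ guarantees $k = o(n)$ and $2^k \geq n\psi(n)$, which is what will make the budget fit.

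First, in banks of $2^k$ and $2^{n-k}$ ancillas I compute all minterms $m^y_b(y)$, $b \in \ZZ_2^k$, and $m^z_a(z)$, $a \in \ZZ_2^{n-k}$, by a standard doubling construction that uses $O(2^k + 2^{n-k}) = o(2^n)$ 2-CNOT gates. Second, I partition $\ZZ_2^k$ into $p = \lceil 2^k/s\rceil$ contiguous blocks $B_1,\ldots,B_p$ of size at most $s = n - k - \lceil \log_2 n\rceil$ (feasible since $s \leq 2^k$), and for each block $B_i$ and each pattern $\tau\colon B_i \to \{0,1\}$ I allocate an ancilla $F_{i,\tau}$ and set $F_{i,\tau} \leftarrow \bigoplus_{y' \in B_i,\,\tau(y')=1} m^y_{y'}(y)$ using at most $s$ CNOT gates; the stage costs $p\cdot 2^s\cdot s = 2^{k+s} = 2^n/n$ gates and $p\cdot 2^s$ ancillas.

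Third, for every pair $(j,a) \in \{1,\ldots,n\}\times\ZZ_2^{n-k}$ I allocate an accumulator ancilla $R_{j,a}$ and set $R_{j,a} \leftarrow \bigoplus_{i=1}^p F_{i,\tau_{j,i,a}}$, where $\tau_{j,i,a}$ is the restriction of $y'\mapsto f_j(y',a)$ to $B_i$; by construction $R_{j,a} = f_j(y,a)$. This bank uses $n\cdot 2^{n-k}$ ancillas and $n\cdot 2^{n-k}\cdot p = n\cdot 2^n/s$ CNOTs. Fourth, I apply, for each $(j,a)$, a single 2-CNOT with controls $m^z_a,R_{j,a}$ and target the $j$-th designated output wire, adding $n\cdot 2^{n-k}$ gates. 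Summing with the choice $s \sim n$ gives $o(2^n) + 2^n/n + (1+o(1))\cdot 2^n + o(2^n) = (1+o(1))\cdot 2^n$ gates, while the ancilla count is dominated by the accumulator bank and equals $(1+o(1))\,n\cdot 2^{n-k} \sim q_0$. This will yield $L(n,q_0)\lesssim 2^n$.

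\emph{Main obstacle.} The delicate point is the simultaneous balancing of three quantities: the pattern-bank cost $2^{k+s}$, the accumulator-filling cost $n\cdot 2^n/s$, and the ancilla budget $\sim n\cdot 2^{n-k}$. The hypothesis $\phi(n)\leq n/(\log_2 n + \log_2 \psi(n))$ enters exactly here: it forces $2^k$ to exceed $n$ by a slowly growing factor $\psi(n)$, which is precisely what lets me take $s$ close to $n-k$ (subject to the rigid inequality $s \leq 2^k$), thereby driving $n/s\to 1$ and making the dominant gate-count term $n\cdot 2^n/s$ asymptotically equal to $2^n$ rather than a larger multiple. Verifying that the accumulator bank does not inadvertently overflow $q_0$ once the other three banks are accounted for is the one piece that requires a careful, though routine, estimate.
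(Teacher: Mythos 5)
Your architecture coincides with the paper's (a Lupanov-type scheme: minterm banks for the first $k$ and last $n-k$ variables, a bank of XOR-patterns over groups of $s$ minterms, accumulators for the $n2^{n-k}$ coordinate functions, and a final 2-CNOT combination stage), and your gate count is handled correctly. The genuine gap is in the ancilla accounting for the pattern bank. With your choice $s = n-k-\lceil\log_2 n\rceil$ that bank contains $\sum_{i} 2^{|B_i|}\approx p\cdot 2^s = 2^{k+s}/s \approx 2^n/n^2$ cells $F_{i,\tau}$, while the budget is $q_0 \sim n2^{n-k}=n2^n/2^k$. The theorem is stated for \emph{arbitrarily slowly growing} $\phi$, so $2^k = 2^{\lceil n/\phi(n)\rceil}$ is superpolynomial in $n$ in the regime of interest, and $2^n/n^2$ exceeds $q_0$ by the factor $2^k/n^3\to\infty$. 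Thus it is the pattern bank, not the accumulator bank, that overflows; your closing claim that the ancilla count is dominated by $n2^{n-k}$ fails exactly where the theorem has content, and your construction as written only proves $L(n,q)\lesssim 2^n$ for $q\sim 2^n/n^2$, a substantially weaker statement. (Only at the extreme $\phi(n)= n/(\log_2 n+\log_2\psi(n))$, where $2^k= n\psi(n)$, does your accounting go through.)

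The fix is small: you do not need $s$ that large. Take $s = n-2k$, as the paper does. Since $k=\lceil n/\phi(n)\rceil = o(n)$, you still have $n/s = 1/(1-2k/n)\to 1$, so the accumulator stage still costs $n2^n/s\sim 2^n$ gates; the pattern bank now costs at most $p\cdot 2^s\cdot s\approx 2^{k+s}=2^{n-k}=o(2^n)$ gates even with your direct filling of each $F_{i,\tau}$ by $\leqslant s$ CNOTs, and only $p\cdot 2^s\approx 2^{n-k}/s = o(n2^{n-k})$ ancillas, which fits inside $q_0$. The paper additionally computes each group's $2^s$ XOR-patterns by the same doubling trick as in its Lemma~\ref{lemma_complexity_of_all_conjunctions_of_n_variables}, spending $\sim 2^{s+1}$ gates per group instead of your $s2^s$; with $s=n-2k$ this refinement is not needed for the asymptotics, but it is the reason the paper can afford the smaller $s$ in the first place. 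Your observation that $\phi(n)\leqslant n/(\log_2 n+\log_2\psi(n))$ forces $2^k\geqslant n\psi(n)$ is correct and is exactly what guarantees $p=\lceil 2^k/s\rceil\sim 2^k/s\to\infty$; it is not needed to push $s$ beyond $n-2k$.
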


\begin{theorem}\label{theorem_complexity_upper_with_memory_common}
    Верно соотношение
    $$
        L(n,q_0) \asymp 2^n \text{ \,при\, } q_0 \sim n 2^{n-\lceil n \mathop / \phi(n)\rceil} \; ,
    $$
    где $\phi(n) \leqslant n \mathop / (\log_2 n + \log_2 \psi(n))$
    и $\psi(n)$~"--- любые сколь угодно медленно растущие функции.
\end{theorem}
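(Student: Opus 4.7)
План доказательства состоит в сведении утверждения к уже сформулированным верхней и нижней оценкам, по аналогии с доказательством Теоремы~\ref{theorem_complexity_upper_no_memory_common}. Верхняя оценка $L(n, q_0) \lesssim 2^n$ уже доказана в Теореме~\ref{theorem_complexity_upper_with_memory}, поэтому остаётся показать совпадающую по порядку нижнюю оценку $L(n, q_0) \gtrsim 2^n$, применив Теорему~\ref{theorem_complexity_lower_bound} к $q = q_0$.

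Сначала я бы оценил асимптотику $\log_2(n + q_0)$. Поскольку функция $\phi(n)$ неограниченно (хотя и сколь угодно медленно) растёт, то $\lceil n / \phi(n) \rceil = o(n)$. Следовательно, из $q_0 \sim n\,2^{n - \lceil n / \phi(n) \rceil}$ получаем $n + q_0 \sim q_0$ и
$$
    \log_2(n + q_0) = n - \lceil n / \phi(n) \rceil + \log_2 n + o(1) \sim n  \; .
$$

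Далее я бы подставил эту асимптотику в неравенство Теоремы~\ref{theorem_complexity_lower_bound}:
$$
    L(n, q_0) \geqslant \frac{2^n(n-2)}{3 \log_2(n + q_0)} - \frac{n}{3} \sim \frac{2^n}{3}  \; .
$$
В сочетании с верхней оценкой $L(n, q_0) \lesssim 2^n$ это даёт искомое соотношение $L(n, q_0) \asymp 2^n$.

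Существенных препятствий я не ожидаю: утверждение является непосредственным следствием уже доказанных Теорем~\ref{theorem_complexity_lower_bound} и~\ref{theorem_complexity_upper_with_memory}. Единственный технический момент, требующий аккуратности, --- это корректная оценка $\log_2(n + q_0)$ с учётом округления $\lceil n / \phi(n) \rceil$, но данное округление даёт лишь поправку порядка $o(n)$ и не влияет на итоговое асимптотическое равенство.
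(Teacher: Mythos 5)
Ваше доказательство верно и по существу совпадает с доказательством в статье: там Теорема~\ref{theorem_complexity_upper_with_memory_common} также выводится как непосредственное следствие Теорем~\ref{theorem_complexity_lower_bound} и~\ref{theorem_complexity_upper_with_memory}. Вы лишь явно выписали проверку асимптотики $\log_2(n+q_0) \sim n$, которую статья оставляет читателю.
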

\begin{proof}
    Следует из Теорем~\ref{theorem_complexity_lower_bound} и~\ref{theorem_complexity_upper_with_memory}.
\end{proof}
   
Из Теорем~\ref{theorem_complexity_upper_no_memory_common} и~\ref{theorem_complexity_upper_with_memory_common}
следует важный вывод:
\begin{assertion}
    Использование дополнительной памяти в обратимых схемах, состоящих из функциональных элементов множества $\Omega_n^2$,
    почти всегда позволяет снизить сложность обратимой схемы.
\end{assertion}
\noindent Стоит отметить, что данный факт снижения сложности за счет дополнительных входов в общем случае
не был установлен для схем, состоящих из классических необратимых функциональных элементов.

\section{Нижняя оценка сложности обратимых схем}

В работах~\cite{shende,my_equivalent_replacements} было показано, что для любой подстановки $h \in A(\ZZ_2^n)$
при $n > 3$ можно построить задающую ее обратимую схему, состоящую из функциональных элементов множества $\Omega_n^2$.
Другими словами, множество подстановок, задаваемых всеми функциональными элементами из $\Omega_n^2$, $n > 3$,
генерирует знакопеременную группу $A(\ZZ_2^n)$.

В работе~\cite{gluhov} было показано, что длина $L(G,M)$ группы подстановок $G$ относительно системы образующих $M$
удовлетворяет неравенству
\begin{equation}\label{formula_gluhov_lower_bound}
    L(G,M) \geqslant \left \lceil \log_{|M|} |G| \right \rceil  \; .
\end{equation}

В нашем случае $G = A(\ZZ_2^n)$, $|G| = (2^n)!/2$, $|M| = |\Omega_n^2|$. Поскольку мощность множества $\Omega_n^2$ равна
\begin{equation}\label{formula_size_of_set_omega_n_2}
    |\Omega_n^2| = \sum_{k = 0}^2 {(n-k)  {n \choose k}} = \frac{n^3}{2} \left( 1 + o(1) \right)  \; ,
\end{equation}
то мы можем вывести простую нижнюю оценку для $L(n,0)$:
\begin{gather}
    L(n,0) \gtrsim \frac{\log_2 ((2^n)!/2)}{\log_2 (n^3/2)}
           \gtrsim \frac{\log_2 2^{n2^n} - \log_2 e^{2^n}}{3 \log_2 n}  \; , \notag \\
    L(n,0) \gtrsim \frac{n2^n}{3 \log_2 n}  \; .
    \label{formula_simple_lower_bound}
\end{gather}

Нижняя оценка~\eqref{formula_gluhov_lower_bound} в работе~\cite{gluhov} строго доказана не была и, по мнению автора,
основывается на не совсем верном предположении, что достаточно рассмотреть только все возможные произведения подстановок из $M$
длины ровно $L(G,M)$, чтобы получить все элементы группы подстановок $G$. Данное предположение верно только для системы
образующих $M$, содержащей тождественную подстановку. В противном случае, необходимо рассматривать все возможные произведения
подстановок из $M$ длины менее $L(G,M)$ в том числе. Из описания множества $\Omega_n^2$ видно, что множество подстановок,
задаваемых всеми функциональными элементами $\Omega_n^2$, не содержит тождественной подстановки.

Для того, чтобы получить общую нижнюю оценку $L(n,q)$, также необходимо учитывать те булевы отображения,
которые могут быть реализованы обратимой схемой с $(n+q)$ входами. Таких отображений не более $A_{n+q}^n$
(количество размещений из $(n+q)$ по $n$ без повторений).

Перейдем теперь непосредственно к доказательству Теоремы~\ref{theorem_complexity_lower_bound}.
\begin{proof}[Доказательство Теоремы~\ref{theorem_complexity_lower_bound}]
    $ $
    
    Докажем при помощи мощностного метода Риордана-Шеннона, что верно неравенство
    $$
        L(n,q) \geqslant \frac{2^n(n-2)}{3\log_2(n+q)} - \frac{n}{3}  \; .
    $$
    
    Пусть $r = |\Omega_n^2|$. Из формулы~\eqref{formula_size_of_set_omega_n_2} следует, что
    \begin{gather*}
        r = \sum_{k=0}^2{(n-k)\binom{n}{k}} = \frac{n^3 - n^2 + 2n}{2}  \; , \notag \\
        \frac{n^2(n-1)}{2} + 1 < r \leqslant \frac{n^3}{2} \text{ \,при\, } n \geqslant 2  \; .
    \end{gather*}

    Обозначим через $\EuScript C^*(n,s) = r^s$ и $\EuScript C(n,s)$ количество всех обратимых схем,
    состоящих из функциональных элементов множества $\Omega_n^2$, сложность которых равна $s$ и не превышает $s$, соответственно.
    Тогда
    \begin{align*}
        \EuScript C(n,s) &= \sum_{i=0}^s {\EuScript C^*(n,i)} = \frac{r^{s+1} - 1}{r-1}
            \leqslant \left( \frac{n^3}{2} \right)^{s+1} \cdot \frac{2}{n^2(n-1)}  \; , \\
        \EuScript C(n,s) &\leqslant \left( \frac{n^3}{2} \right)^s \cdot \left(1 + \frac{1}{n-1}\right)
            \text{ \,при\, } n \geqslant 2  \; .
    \end{align*}

    Как было сказано выше, каждой обратимой схеме с $(n+q)$ входами
    соответствует не более $A_{n+q}^n$ различных булевых отображений $\ZZ_2^n \to \ZZ_2^n$.
    Следовательно, верно следующее неравенство:
    $$
       \EuScript C(n+q,L(n,q)) \cdot A_{n+q}^n \geqslant |F(n,q)|  \; .
    $$
    Поскольку $|F(n,q)| \geqslant |A(\ZZ_2^n)| = (2^n)! \mathop / 2$ и $A_{n+q}^n \leqslant (n+q)^n$, то
    $$
        \left( \frac{(n+q)^3}{2} \right)^{L(n,q)} \cdot \left(1 + \frac{1}{n+q-1}\right)
            \cdot (n+q)^n \geqslant (2^n)! \mathop / 2  \; .
    $$
    
    Несложно убедиться, что при $n > 0$ верно неравенство $(2^n)! \geqslant (2^n \mathop / e)^{2^n}$.
    Следовательно,
    \begin{multline*}
        L(n,q) \cdot (3\log_2(n+q) - 1) + \log_2 \left(1 + \frac{1}{n+q-1}\right) + \\
            + n \log_2(n+q) \geqslant 2^n(n - \log_2 e)  \; .
    \end{multline*}
    
    Отсюда следует неравенство из условия теоремы
    $$
        L(n,q) \geqslant \frac{2^n(n-2)}{3\log_2(n+q)} - \frac{n}{3}  \; .
    $$
\end{proof}


\section{Верхняя оценка сложности обратимых схем без дополнительных входов}

В работе~\cite{my_fast_synthesis_algorithm} был предложен алгоритм синтеза обратимой схемы,
состоящей из функциональных элементов множества $\Omega_n^2$ и задающей подстановку $h \in A(\ZZ_2^n)$,
использующий теорию групп подстановок. Данный алгоритм синтеза основан на представлении подстановки $h$
в виде произведения пар независимых транспозиций. Было показано, что схема $\frS$, синтезированная данным алгоритмом,
имеет сложность $L(\frS) \lesssim 7n 2^n$. Отсюда можно вывести простую верхнюю оценку для $L(n,0)$:
\begin{equation}\label{formula_simple_upper_bound_no_memory}
    L(n,0) \lesssim 7n 2^n  \; .
\end{equation}
Если взять за основу данный подход синтеза, то верхнюю оценку~\eqref{formula_simple_upper_bound_no_memory}
можно существенно улучшить.

\begin{proof}[Доказательство Теоремы~\ref{theorem_complexity_upper_no_memory}]
    Доказательство основано на описании алгоритма синтеза, позволяющего получить для любой четной подстановки $h \in A(\ZZ_2^n)$
    задающую ее обратимую схему $\frS$ со сложностью:
    $$
        L(\frS) \leqslant \frac{3n2^{n+4}}{\log_2 n - \log_2 \log_2 n - \log_2 \phi(n)}
            \left( 1 + \varepsilon(n) \right)  \; ,
    $$
    где $\phi(n) < n \mathop / \log_2 n$~"--- любая сколь угодно медленно растущая функция,
    а функция $\varepsilon(n)$ равна:
    $$
        \varepsilon(n) = \frac{1}{6\phi(n)} +\left(\frac{8}{3} + o(1)\right)
            \frac{\log_2 n \cdot \log_2 \log_2 n}{n}  \; .
    $$

    Каждую подстановку $h \in A(\ZZ_2^n)$ можно представить в виде произведения независимых циклов,
    причем сумма длин этих циклов не превосходит $2^n$.
    Произведение двух независимых циклов можно выразить следующим образом:
    \begin{multline}
        \label{formula_decompostion_of_two_cycles}
        (i_1, i_2, \ldots, i_{l_1}) \circ (j_1, j_2, \ldots, j_{l_2}) = \\
         = (i_1, i_2) \circ (j_1, j_2) \circ (i_1, i_3, \ldots, i_{l_1}) \circ (j_1, j_3, \ldots, j_{l_2})  \; .
    \end{multline}
    Цикл длины $l \geqslant 5$ можно выразить следующим образом:
    \begin{equation}
        \label{formula_decompostion_of_k_cycle}
        (i_1, i_2, \ldots, i_l) = (i_1, i_2) \circ (i_3, i_4) \circ (i_1, i_3, i_5, i_6, \ldots, i_l)  \; .
    \end{equation}

    Представим подстановку $h$ в виде произведения независимых транспозиций, разбитых на группы по $K$ транспозиций в каждой,
    и некоторой остаточной подстановки $h'$:
    \begin{equation}
        h = \mcirc_{\vv x_t, \vv y_t \in \ZZ_2^n} {\left( (\vv x_1, \vv y_1) \circ \ldots
        \circ (\vv x_K, \vv y_K) \right )} \circ h'  \; .
        \label{formula_permutation_decomposition_main}
    \end{equation}
    Оценим количество независимых циклов и их длину в представлении подстановки $h'$.
    Согласно формулам~\eqref{formula_decompostion_of_two_cycles} и~\eqref{formula_decompostion_of_k_cycle} из подстановки $h'$
    нельзя получить $K$ независимых транспозиций, если количество независимых циклов в ее представлении строго меньше $K$
    и их длина строго меньше 5-ти.
    Таким образом, сумма длин циклов в представлении $h'$ не превосходит $4(K-1)$.

    Обозначим через $M_g$ множество подвижных точек подстановки $g \in S(\ZZ_2^n)$:
    $$
        M_g = \{\,\vv x \in \ZZ_2^n\mid g(\vv x) \ne \vv x\,\}  \; .
    $$
    Тогда $|M_h| \leqslant 2^n$, $|M_{h'}| \leqslant 4(K-1)$.

    Из формул~\eqref{formula_decompostion_of_two_cycles}--\eqref{formula_permutation_decomposition_main}
    следует, что в представлении подстановки $h$ в виде произведения траспозиций можно получить не более
    $|M_h| \mathop / K$ групп,
    в каждой из которых $K$ независимых транспозиций, а в представлении подстановки $h'$ в виде произведения траспозиций
    можно получить не более $|M_{h'}| \mathop / 2$ пар независимых транспозиций и не более одной пары зависимых транспозиций.
    Пара зависимых транспозиций $(i,j) \circ (i, k)$ выражается через произведение двух пар независимых транспозиций:
    $$
        (i,j) \circ (i, k) = ((i,j) \circ (r, s)) \circ ((r,s) \circ (i,k))  \; .
    $$

    Обозначим через $f_h$ булево отображение $\ZZ_2^n \to \ZZ_2^n$, соответствующее подстановке $h$.
    Тогда можно оценить сверху $L(f_h,0)$ следующим образом:
    \begin{gather}
        L(f_h,0) \leqslant \frac{|M_h|}{K} \cdot L(f_{g^{(K)}},0) + \left(\frac{|M_{h'}|}{2} + 2 \right )
            \cdot L(f_{g^{(2)}},0)   \; , \notag \\
        L(f_h,0) \leqslant \frac{2^n}{K} L(f_{g^{(K)}},0) + 2K \cdot L(f_{g^{(2)}},0)  \; .
            \label{formula_upper_bound_of_L_h_common}
    \end{gather}
    где $g^{(i)}$~"--- произвольная подстановка, представляющая собой произведение $i$ независимых транспозиций.
    Опишем алгоритм синтеза, позволяющий получить обратимую схему $\frS$, реализующую отображение $f_h$.

    Рассмотрим произвольную подстановку $g^{(K)}$.
    Обозначим через $k$ величину $|M_{g^{(K)}}|$, тогда $k = 2K$.
    Суть описываемого алгоритма заключается в действии сопряжением на подстановку $g^{(K)}$
    таким образом, чтобы получить некоторую новую подстановку, соответствующую одному обобщенному элементу Тоффоли.
    Напомним, что действие сопряжением не меняет цикловой структуры подстановки, поэтому подстановка $g^{(K)}$ в результате
    действия сопряжением всегда будет оставаться произведением $K$ независимых транспозиций.
    Любой элемент $E$ из множества $\Omega_n^2$ задает подстановку $h_{E}$ на множестве двоичных векторов $\ZZ_2^n$.
    Для этой подстановки верно равенство $h^{-1}_{E} = h_{E}$.
    Следовательно, применение к $g^{(K)}$ действия сопряжением подстановкой $h_{E}$,
    записываемое как $h^{-1}_{E} \circ g^{(K)} \circ h_{E}$, соответствует
    присоединению элемента $E$ к началу и к концу текущей обратимой подсхемы.

    Пусть $g^{(K)} = (\vv x_1, \vv y_1) \circ \ldots \circ (\vv x_K, \vv y_K)$.
    Составим матрицу $A$ следующим образом:
    \begin{equation}
        A =
            \left(
                \begin{matrix}
                    \vv x_1 \\
                    \vv y_1 \\
                    \ldots \\
                    \vv x_K \\
                    \vv y_K
                \end{matrix}
            \right )
          =
            \left(
                \begin{matrix}
                    a_{1,1}   & \ldots & a_{1,n}   \\
                    a_{2,1}   & \ldots & a_{2,n}   \\
                    \hdotsfor{3}                   \\
                    a_{k-1,1} & \ldots & a_{k-1,n} \\
                    a_{k,1}   & \ldots & a_{k,n}   \\
                \end{matrix}
            \right )  \; .
        \label{formula_matrix_for_permutation}
    \end{equation}

    Наложим на значение $k$ следующее ограничение: $k$ должно быть степенью двойки, $2^{\lfloor \log_2 k \rfloor} = k$.
    Если $k \leqslant \log_2 n$, то в матрице $A$ существует не более $2^k$ и не менее $\log_2 k$ попарно различных столбцов.
    Без ограничения общности будем считать, что такими столбцами являются первые $d \leqslant 2^k$ столбцов матрицы.
    Тогда для любого $j$-го столбца, $j > d$, найдется равный ему $i$-й столбец, $i \leqslant d$.
    Следовательно, применив к подстановке $g^{(K)}$ действие сопряжением подстановкой,
    задаваемой функциональным элементом $C_{i;j}$,
    можно обнулить $j$-й столбец в матрице $A$ (для этого потребуется 2 элемента CNOT).
    Обнуляя таким образом все столбцы с индексами больше $d$, использовав $L_1 \leqslant 2(n-d)$ функциональных элементов CNOT,
    мы получим новую подстановку $g_1^{(K)}$ и соответствующую ей матрицу $A_1$ следующего вида:
    $$
        A_1 =
            \left(\phantom{\begin{matrix} 0\\0\\0\\0\\0\\ \end{matrix}} \right.
            \begin{matrix}
                a_{1,1}   & \ldots & a_{1,d}   \\
                a_{2,1}   & \ldots & a_{2,d}   \\
                \hdotsfor{3}                     \\
                a_{k-1,1} & \ldots & a_{k-1,d} \\
                a_{k,1}   & \ldots & a_{k,d}   \\
            \end{matrix}
            \phantom{\begin{matrix} 0\\0\\0\\0\\0\\ \end{matrix}}
            \overbrace{
                \begin{matrix}
                    0 &\ldots & 0   \\
                    0 &\ldots & 0   \\
                    \hdotsfor{3}    \\
                    0 &\ldots & 0   \\
                    0 &\ldots & 0   \\
                \end{matrix}
            }^{n - d}
            \left.\phantom{\begin{matrix} 0\\0\\0\\0\\0\\ \end{matrix}} \right)  \; .
    $$

    Теперь для всех $a_{1,i} = 1$ применяем к $g_1^{(K)}$ действие сопряжением подстановкой,
    задаваемой функциональным элементом $N_i$. Для этого потребуется $L_2 \leqslant 2d$ элементов NOT.
    В итоге получим подстановку $g_2^{(K)}$ и соответствующую ей матрицу
    $A_2$ (элементы матрицы обозначены через $b_{i,j}$, чтобы показать их возможное отличие от элементов матрицы $A_1$):
    $$
        A_2 =
            \left(\phantom{\begin{matrix} 0\\0\\0\\0\\0\\ \end{matrix}} \right.
            \begin{matrix}
                0         & \ldots & 0         \\
                b_{2,1}   & \ldots & b_{2,d}   \\
                \hdotsfor{3}                   \\
                b_{k-1,1} & \ldots & b_{k-1,d} \\
                b_{k,1}   & \ldots & b_{k,d}   \\
            \end{matrix}
            \phantom{\begin{matrix} 0\\0\\0\\0\\0\\ \end{matrix}}
            \overbrace{
                \begin{matrix}
                    0 &\ldots & 0   \\
                    0 &\ldots & 0   \\
                    \hdotsfor{3}    \\
                    0 &\ldots & 0   \\
                    0 &\ldots & 0   \\
                \end{matrix}
            }^{n - d}
            \left.\phantom{\begin{matrix} 0\\0\\0\\0\\0\\ \end{matrix}} \right)  \; .
    $$

    Следующим шагом является приведение матрицы $A_2$ к \textit{каноническому виду}, где каждая строка, если ее записать
    в обратном порядке, представляет собой запись в двоичной системе счисления числа <<номер строки минус 1>>.

    Все строки матрицы $A_2$ различны. Первая строка уже имеет канонический вид, поэтому мы последовательно будем приводить
    оставшиеся строки к каноническому виду, начиная со второй.
    Предположим, что текущая строка имеет номер $i$, и все строки с номерами от $1$ до $(i-1)$ имеют канонический вид.
    Возможны два случая:
    \begin{enumerate}
        \item
            Существует ненулевой элемент в $i$-й строке с индексом $j > \log_2 k$: $b_{i,j} = 1$.
            В этом случае для всех элементов матрицы $b_{i, j'}$, $j' \ne j$, $j' \leqslant d$,
            не равных $j'$-ой цифре в двоичной записи числа $(i-1)$, мы применяем к $g_2^{(K)}$ действие сопряжением
            подстановкой, задаваемой функциональным элементом $C_{j;j'}$. Для этого потребуется не более $2d$ элементов CNOT.
            После этого нам остается только обнулить $j$-й элемент текущей строки.
            Для этого мы применяем к $g_2^{(K)}$ действие сопряжением подстановкой, задаваемой
            функциональным элементом $C_{I;j}$, где $I$~"--- множество индексов ненулевых цифр в двоичной записи числа $(i-1)$.
            К примеру, если $i = 6$, то $I = \{\,1, 3\,\}$.
            Поскольку $|I| \leqslant \log_2 k$, мы можем заменить данный функциональный элемент $C_{I;j}$
            композицией не более $8 \log_2 k$ функциональных элементов 2-CNOT~\cite{shende}.
            Следовательно, для данного действия сопряжением нам потребуется не более $16 \log_2 k$ элементов 2-CNOT.
            
            Итак, суммируя количество используемых функциональных элементов, мы получаем, что для приведения $i$-й строки
            к каноническому виду в данном случае требуется $L_3^{(i)} \leqslant 2d + 16 \log_2 k$
            элементов из множества $\Omega_n^2$.

        \item
            Не существует ненулевого элемента в $i$-й строке с индексом $j > \log_2 k$: $b_{i,j} = 0$ для всех $j > \log_2 k$.
            В этом случае мы применяем к $g_2^{(K)}$ действие сопряжением подстановкой,
            задаваемой функциональным элементом $C_{I;\log_2 k + 1}$, где $I$~--- множество индексов ненулевых элементов
            текущей строки.
            Т.\,к. все строки матрицы различны и при этом все предыдущие строки находятся в каноническом виде,
            мы можем утверждать, что значение элемента матрицы $b_{j,\log_2 k + 1}$ после данного действия сопряжением
            будет изменено только в случае, если $j \geqslant i$.
            Поскольку $|I| \leqslant \log_2 k$, мы можем заменить данный функциональный элемент $C_{I;j}$
            композицией не более $8 \log_2 k$ функциональных элементов 2-CNOT~\cite{shende}.
            Следовательно, для данного действия сопряжением нам потребуется не более $16 \log_2 k$ элементов 2-CNOT.
            После этого мы можем перейти к предыдущему случаю.

            Итак, суммируя количество используемых функциональных элементов, мы получаем, что для приведения $i$-й строки
            к каноническому виду в данном случае требуется $L_3^{(i)} \leqslant 2d + 32 \log_2 k$
            элементов из множества $\Omega_n^2$.
    \end{enumerate}

    После приведения матрицы $A_2$ к каноническому виду, мы получим новую подстановку $g_3^{(K)}$
    и соответствующую ей матрицу $A_3$ следующего вида:
    $$
        A_3 =
            \left(\phantom{\begin{matrix} 0\\0\\0\\0\\0\\ \end{matrix}} \right.
            \overbrace{
                \begin{matrix}
                    0 & 0 & 0 & \ldots & 0 \\
                    1 & 0 & 0 & \ldots & 0 \\
                    \hdotsfor{5}           \\
                    0 & 1 & 1 & \ldots & 1 \\
                    1 & 1 & 1 & \ldots & 1 \\
                \end{matrix}
            }^{\log_2 k}        
            \phantom{\begin{matrix} 0\\0\\0\\0\\0\\ \end{matrix}}
            \overbrace{
                \begin{matrix}
                    0 &\ldots & 0   \\
                    0 &\ldots & 0   \\
                    \hdotsfor{3}    \\
                    0 &\ldots & 0   \\
                    0 &\ldots & 0   \\
                \end{matrix}
            }^{n - \log_2 k}
            \left.\phantom{\begin{matrix} 0\\0\\0\\0\\0\\ \end{matrix}} \right)  \; .
    $$
    Для этого в сумме потребуется $L_3$ функциональных элементов множества $\Omega_n^2$:
    $$
        L_3 = \sum_{i=2}^k {L_3^{(i)}} \leqslant k(2d + 32 \log_2 k)  \; .
    $$
    При этом мы получили еще одно ограничение на значение $k$: значение $\log_2 k$ должно быть строго меньше $n$,
    иначе не всегда будет возможно привести матрицу $A_2$ к каноническому виду.

    На последнем шаге для каждого $i > \log_2 k$ мы применяем к $g_3^{(K)}$ действие сопряжением подстановкой,
    задаваемой функциональным элементом $N_i$. Для этого нам потребуется $L_4 = 2(n - \log_2 k)$ элементов NOT.
    В итоге получим подстановку $g_4^{(K)}$ и соответствующую ей матрицу $A_4$ следующего вида:
    $$
        A_4 =
            \left(\phantom{\begin{matrix} 0\\0\\0\\0\\0\\ \end{matrix}} \right.
            \overbrace{
                \begin{matrix}
                    0 & 0 & 0 & \ldots & 0 \\
                    1 & 0 & 0 & \ldots & 0 \\
                    \hdotsfor{5}           \\
                    0 & 1 & 1 & \ldots & 1 \\
                    1 & 1 & 1 & \ldots & 1 \\
                \end{matrix}
            }^{\log_2 k}        
            \phantom{\begin{matrix} 0\\0\\0\\0\\0\\ \end{matrix}}
            \overbrace{
                \begin{matrix}
                    1 &\ldots & 1   \\
                    1 &\ldots & 1   \\
                    \hdotsfor{3}    \\
                    1 &\ldots & 1   \\
                    1 &\ldots & 1   \\
                \end{matrix}
            }^{n - \log_2 k}        
            \left.\phantom{\begin{matrix} 0\\0\\0\\0\\0\\ \end{matrix}} \right)  \; .
    $$

    Подстановка $g_4^{(K)}$ задается одним функциональным элементом $C_{n,n-1, \ldots, \log_2 k + 1; 1}$.
    Этот элемент имеет $(n - \log_2 k)$ контролирующих входов, поэтому он может быть заменен композицией не более
    $L_5 \leqslant 8(n - \log_2 k)$ функциональных элементов 2-CNOT~\cite{shende}.
    
    Мы получили подстановку $g_4^{(K)}$, применяя к $g^{(K)}$ действие сопряжением подстановками определенного вида.
    Если мы применим к $g_4^{(K)}$ действие сопряжением в точности теми же подстановками, но в обратном порядке,
    мы получим $g^{(K)}$. В терминах синтеза обратимой логики это означает, что мы должны присоединить
    ко входу и выходу функционального элемента $C_{n,n-1, \ldots, \log_2 k + 1; 1}$ все те функциональные элементы,
    что мы использовали в наших преобразованиях исходной матрицы $A$, но в обратном порядке,
    и как результат, мы получим обратимую схему $\frS_K$, задающую подстановку $g^{(K)}$.
    
    Таким образом, можно утверждать, что $L(g^{(K)},0) \leqslant L(\frS_K)$ и
    \begin{multline*}
        L(g^{(K)},0) \leqslant \sum_{i=1}^5 {L_i} \leqslant 2(n-d) + 2d + \\
            + k(2d + 32 \log_2 k) + 2(n-\log_2 k) + 8(n - \log_2 k)  \; ,
    \end{multline*}
    
    \begin{equation}\label{formula_upper_bound_in_synthesis_algorithm}
        L(g^{(K)},0) \leqslant 12n + k2^{k+1} + 32k\log_2 k - 10 \log_2 k  \; .
    \end{equation}
    Отсюда также следует, что $L(g^{(2)},0) \leqslant 12n + 364$.

    Подставляя полученные верхние оценки в формулу~\eqref{formula_upper_bound_of_L_h_common},
    мы получаем следующую верхнюю оценку для $L(f_h,0)$:
    \begin{equation}\label{formula_L_h_upper_bound_general_case_with_k}
        L(f_h,0) \leqslant \frac{2^{n+1}}{k}(12n + k2^{k+1} + 32k\log_2 k - 10 \log_2 k) + k(12n + 364)  \; .
    \end{equation}

    Описанным алгоритмом требуется, чтобы $k$ было степенью двойки и чтобы $\log_2 k$ было строго меньше $n$.
    Пусть $m = \log_2 n - \log_2 \log_2 n - \log_2 \phi(n)$ и $k = 2^{\lfloor \log_2 m \rfloor}$,
    где $\phi(n) < n \mathop / \log_2 n$~"--- сколь угодно медленно растущая функция.
    Тогда $m / 2 \leqslant k \leqslant m$ и
    \begin{gather*}
        L(f_h,0) \leqslant \frac{2^{n+2}}{m}(12n + 2m2^m + 32m\log_2 m) + m(12n + 364)  \; , \\
        L(f_h,0) \leqslant \frac{3n2^{n+4}}{m} 
            \left( 1 + \frac{2^m\log_2 n}{6n} + \left(\frac{8}{3} + o(1)\right)
            \frac{\log_2 n \cdot \log_2 \log_2 n}{n} \right)  \; .
    \end{gather*}

    Отсюда следует итоговая верхняя оценка для $L(f_h,0)$:
    $$
        L(f_h,0) \leqslant \frac{3n2^{n+4}}{\log_2 n - \log_2 \log_2 n - \log_2 \phi(n)}
            \left( 1 + \varepsilon(n) \right)  \; ,
    $$
    где функция $\varepsilon(n)$ равна:
    $$
        \varepsilon(n) = \frac{1}{6\phi(n)} +\left(\frac{8}{3} + o(1)\right)
            \frac{\log_2 n \cdot \log_2 \log_2 n}{n}  \; .
    $$

    Поскольку мы описали алгоритм синтеза обратимой схемы $\frS$ для произвольной подстановки $h$,
    то $L(n,0) \leqslant L(f_h,0)$.
\end{proof}

Отметим также, что если представлять подстановку $h \in A(\ZZ_2^n)$ в виде произведения пар независимых транспозиций,
то в этом случае задающая ее обратимая схема $\frS$, синтезируемая описанным алгоритмом,
согласно формуле~\eqref{formula_L_h_upper_bound_general_case_with_k} будет иметь сложность $L(\frS) \lesssim 6n 2^n$.
Данная сложность асимптотически ниже, чем сложность обратимой схемы, синтезированной алгоритмом
из работы~\cite{my_fast_synthesis_algorithm} (см. формулу~\eqref{formula_simple_upper_bound_no_memory}).


\section{Верхняя оценка сложности обратимых схем с дополнительными входами}
    
Функциональный элемент $k$-CNOT при $k < (n-1)$ можно заменить композицией не более $8k$ элементов 2-CNOT~\cite{shende},
если не использовать дополнительные входы. Однако если использовать $(k-2)$ дополнительных входов, то элемент $k$-CNOT
при любом значении $k < n$ можно заменить композицией $(2k-3)$ элементов 2-CNOT.
При этом после такой замены на всех дополнительных выходах будет значение 0, поэтому их можно будет использовать в дальнейшем.
Если же элемент $k$-CNOT заменить композицией $(k-1)$ элементов 2-CNOT с использованием $(k-2)$
дополнительных входов, то на дополнительных выходах после замены могут быть значения, отличные от 0. Как следствие,
эти дополнительные выходы нельзя будет использовать в дальнейшем.

Таким образом, если в алгоритме синтеза, описанном в предыдущем разделе, использовать ровно $(n-3)$ дополнительных входов,
то в формуле~\eqref{formula_upper_bound_in_synthesis_algorithm} слагаемое $12n = 4n + 8n$ можно заменить на $6n = 4n + 2n$.
В этом случае из формулы~\eqref{formula_L_h_upper_bound_general_case_with_k} следует, что
$L(n,n-3) \leqslant 3n2^{n+3}(1+o(1)) \mathop / \log_2n$.
Если же в описанном алгоритме синтеза использовать $q_0 \geqslant (n-3) 2^{n+2} / (\log_2 n - \log_2 \log_2 n - \log_2 \phi(n))$
дополнительных входов, где $\phi(n) < n \mathop / \log_2 n$~"--- сколь угодно медленно растущая функция,
то в формуле~\eqref{formula_upper_bound_in_synthesis_algorithm} слагаемое $12n = 4n + 8n$ можно заменить на $5n = 4n + n$.
В этом случае из формулы~\eqref{formula_L_h_upper_bound_general_case_with_k} следует, что
$L(n,q_0) \leqslant 5n 2^{n+2} / \log_2 n$. Однако можно получить существенно меньшую верхнюю оценку для $L(n,q)$ при
использовании гораздо меньшего количества дополнительных входов, что и будет показано далее.

Лупановым О.\,Б. был предложен асимптотически наилучший метод синтеза схемы из функциональных элементов
в базисе $\{\,\neg, \wedge, \vee\,\}$, реализующей заданную булеву функцию~\cite{yablonsky}.
Было доказано, что для булевой функции от $n$ переменных сложность схемы эквивалентна $2^n / n$.
Воспользуемся данным результатом и применим аналогичный подход для синтеза обратимой схемы, состоящей из функциональных элементов
множества $\Omega_n^2$ и реализующей булево отображение $f \in F(n,q)$ с использованием $q$ дополнительных входов.

Базис функциональных элементов $\{\,\neg, \oplus, \wedge\,\}$ является полным. Каждый элемент этого базиса
можно выразить через композицию функциональных элементов NOT, CNOT и 2-CNOT. Из рис.~\ref{pic_basis} видно,
что для этого требуется не более двух функциональных элементов и не более одного дополнительного входа.
\begin{figure}[ht]
    \centering
    \includegraphics[scale=1.2]{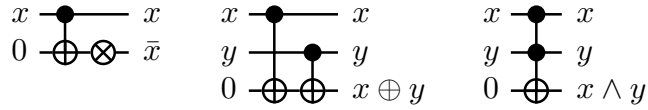}
    \caption
    {
        Выражение функциональных элементов базиса $\{\,\neg, \oplus, \wedge\,\}$ через
        композицию функциональных элементов NOT, CNOT и 2-CNOT.
    }\label{pic_basis}
\end{figure}

Также нам потребуется следующая лемма о сложности обратимой схемы, реализующей все конъюнкции $n$ переменных вида
$x_1^{a_1} \wedge \ldots \wedge x_n^{a_n}$, $a_i \in \ZZ_2$.
\begin{lemma}\label{lemma_complexity_of_all_conjunctions_of_n_variables}
    Все конъюнкции $n$ переменных вида $x_1^{a_1} \wedge \ldots \wedge x_n^{a_n}$, $a_i \in \ZZ_2$,
    можно реализовать обратимой схемой $\frS_n$, состоящей из функциональных элементов множества $\Omega_{n+q}^2$
    и имеющей сложность $L(\frS_n) \sim 2^n$ при использовании $q(\frS_n) \sim 2^n$ дополнительных входов.
\end{lemma}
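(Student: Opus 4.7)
План доказательства состоит в следующем. Основная идея~"--- разбить множество переменных $\{\,x_1, \ldots, x_n\,\}$ на две приблизительно равные группы $X_1 = \{\,x_1, \ldots, x_k\,\}$ и $X_2 = \{\,x_{k+1}, \ldots, x_n\,\}$ при $k = \lfloor n \mathop / 2 \rfloor$, отдельно построить в дополнительных входах все $2^k$ конъюнкций переменных из $X_1$ и все $2^{n-k}$ конъюнкций переменных из $X_2$, а затем для каждой пары конъюнкций $(K_1^a, K_2^b)$, $a \in \ZZ_2^k$, $b \in \ZZ_2^{n-k}$, одним функциональным элементом 2-CNOT записать их произведение $K_1^a \wedge K_2^b$ в свежий дополнительный вход. Поскольку любая конъюнкция $n$ переменных $x_1^{a_1} \wedge \ldots \wedge x_n^{a_n}$ однозначно представляется в таком виде, на этом заключительном этапе получатся ровно $2^n$ искомых конъюнкций за $2^n$ функциональных элементов и с использованием $2^n$ дополнительных входов.

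Подсхему, реализующую все $2^k$ конъюнкций $k$ переменных, построим индукцией по $k$. При $k = 1$ поместим $x_1$ и $\bar x_1$ в две ячейки памяти при помощи двух элементов CNOT и одного NOT. На шаге $k - 1 \to k$, имея в $2^{k-1}$ ячейках все конъюнкции первых $k - 1$ переменных, для каждой такой ячейки $y$, содержащей $K^a$, добавим свежий дополнительный вход $z$ и применим последовательно два элемента: сначала 2-CNOT, записывающий $y \wedge x_k$ в $z$, затем CNOT из $z$ в $y$. В результате в $z$ окажется $K^a \wedge x_k$, а в $y$~"--- значение $K^a \oplus (K^a \wedge x_k) = K^a \wedge \bar x_k$. Переход от $k - 1$ к $k$ требует, таким образом, $2 \cdot 2^{k-1}$ элементов и $2^{k-1}$ новых дополнительных входов, так что вся описанная подсхема имеет сложность $\sim 2^{k+1}$ и использует $\sim 2^k$ дополнительных входов.

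При выборе $k = \lfloor n \mathop / 2 \rfloor$ стоимость построения конъюнкций в каждой из двух групп~"--- как по числу элементов, так и по числу дополнительных входов~"--- есть $O(2^{n/2})$. Основной вклад даёт заключительный этап из $2^n$ независимых 2-CNOT, и в итоге $L(\frS_n) = 2^n(1 + o(1)) \sim 2^n$ при $q(\frS_n) = 2^n(1 + o(1)) \sim 2^n$, что и требуется доказать.

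Основная трудность здесь~"--- добиться именно константы $1$ в главном члене. Прямое индуктивное удвоение до глубины $n$ дало бы $\sim 2^{n+1}$ функциональных элементов, что не удовлетворяет формулировке леммы; идея разбиения переменных на две группы (в духе метода Лупанова) с последующим попарным перемножением как раз и позволяет перевести стоимость построения <<блочных>> конъюнкций в пренебрежимо малую добавку $O(2^{n/2})$, сохраняя коэффициент $1$ у главного члена $2^n$.
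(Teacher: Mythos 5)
Ваше доказательство верно и по существу совпадает с доказательством в статье: переменные разбиваются на две половины, все конъюнкции каждой половины реализуются с затратами $O(2^{n/2})$, после чего все $2^n$ попарных произведений получаются по одному элементу 2-CNOT (и одному свежему дополнительному входу) на каждое, что и даёт главный член $2^n$ как по сложности, так и по числу дополнительных входов. Единственное отличие~"--- внутри половин вы используете линейное удвоение вместо рекурсивного применения той же конструкции к половинам, как в статье, но оба варианта дают пренебрежимо малый вклад $o(2^n)$.
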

\begin{proof}
    Сперва мы реализуем все инверсии $\bar x_i$, $1 \leqslant i \leqslant n$.
    Это может быть сделано при помощи $L_1 = 2n$ элементов NOT и CNOT при использовании $q_1 = n$ дополнительных входов.

    Искомую обратимую схему $\frS_n$ мы строим следующим образом:
    при помощи обратимых схем $\frS_{\lceil n \mathop / 2 \rceil}$ и $\frS_{\lfloor n \mathop / 2 \rfloor}$
    мы реализуем все конъюнкции первых $\lceil n \mathop / 2 \rceil$ и последних $\lfloor n \mathop / 2 \rfloor$
    переменных. Затем мы реализуем конъюнкции выходов этих двух схем каждого с каждым.
    Для этого потребуется $L_2 = 2^n$ элементов 2-CNOT и $q_2 = 2^n$ дополнительных входов.

    Отсюда следует, что
    $$
        L(\frS_n) = q(\frS_n) = 2^n + L(\frS_{\lceil n \mathop / 2 \rceil}) + L(\frS_{\lfloor n \mathop / 2 \rfloor})
            = 2^n(1 + o(1))  \; .
    $$
\end{proof}

Перейдем теперь непосредственно к доказательству Теоремы~\ref{theorem_complexity_upper_with_memory}.
\begin{proof}[Доказательство Теоремы~\ref{theorem_complexity_upper_with_memory}]
    Опишем алгоритм синтеза обратимой схемы $\frS$, реализующей заданное булево отображение $f\colon \ZZ_2^n \to \ZZ_2^n$
    со сложностью $L(\frS) \lesssim 2^n$ при использовании $q_0 \sim n 2^{n-\lceil n \mathop / \phi(n)\rceil}$
    дополнительных входов, где $\phi(n) \leqslant n \mathop / (\log_2 n + \log_2 \psi(n))$
    и $\psi(n)$~"--- любые сколь угодно медленно растущие функции.
    
    Отображение $f$ можно представить следующим образом:
    \begin{equation}\label{formula_function_decomposition_by_last_variables}
        f(\vv x) = \bigoplus_{a_{k+1}, \ldots, a_n \in \ZZ_2} {x_{k+1}^{a_{k+1}} \wedge \ldots \wedge x_n^{a_n}}
        \wedge f(\langle x_1, \ldots, x_k, a_{k+1}, \ldots, a_n \rangle)  \; .
    \end{equation}
    Каждое из $2^{n-k}$ булевых отображений
    $f_i(\langle x_1, \ldots, x_k \rangle) = f(\langle x_1, \ldots, x_k, a_{k+1}, \ldots, a_n \rangle)$,
    где $\sum_{j=1}^{n-k} {a_{k+j} 2^{j-1}} = i$,
    является отображением вида $\ZZ_2^k \to \ZZ_2^n$ и может быть представлено системой $n$ координатных функций
    $f_{i,j}(\vv x)\colon \ZZ_2^k \to \ZZ_2$, $\vv x \in \ZZ_2^k$, $1 \leqslant j \leqslant n$.

    Каждая координатная функция $f_{i,j}(\vv x)$ может быть получена при помощи аналога СДНФ, в котором дизъюнкции заменяются
    на сложение по модулю 2:
    \begin{equation}\label{formula_analog_sdnf}
        f_{i,j}(\vv x) = \bigoplus_{
            \substack{\boldsymbol \sigma \in \ZZ_2^k \\f_{i,j}(\boldsymbol \sigma) = 1}}
            x_1^{\sigma_1} \wedge \ldots \wedge x_k^{\sigma_k}  \; .
    \end{equation}

    Все $2^k$ конъюнкций вида $x_1^{\sigma_1} \wedge \ldots \wedge x_k^{\sigma_k}$ можно разделить на группы,
    в каждой из которых будет не более $s$ конъюнкций. Обозначим через $p$ количество таких групп:
    $p = \lceil 2^k \mathop / s \rceil$.
    Используя конъюнкции одной группы, мы можем реализовать не более $2^s$ булевых функций по формуле%
    ~\eqref{formula_analog_sdnf}.
    Обозначим через $G_i$ множество булевых функций, которые могут быть реализованы при помощи конъюнкций $i$-й группы,
    $1 \leqslant i \leqslant p$. Тогда $|G_i| \leqslant 2^s$.
    Следовательно, мы можем переписать формулу~\eqref{formula_analog_sdnf} следующим образом:
    \begin{equation}
        f_{i,j}(\vv x) = \bigoplus_{
            \substack{t=1 \ldots p\\ g_{j_t} \in G_t\\ 1 \leqslant j_t \leqslant |G_t|}} g_{j_t}(\vv x)  \; .
        \label{formula_analog_sdnf_improved}
    \end{equation}

    Отметим, что все булевы функции множества $G_i$ можно реализовать, используя такой же подход, что и в Лемме%
    ~\ref{lemma_complexity_of_all_conjunctions_of_n_variables}. В этом случае каждый элемент 2-CNOT просто заменяется
    композицией двух элементов CNOT.
    Суммарно нам потребуется $L \lesssim 2^{s+1}$ элементов CNOT и $q \sim 2^s$ дополнительных входов.

    Описываемый алгоритм синтеза конструирует обратимую схему $\frS$, реализующую
    булево отображение $f$~\eqref{formula_function_decomposition_by_last_variables}, при помощи следующих подсхем:
    \begin{enumerate}
        \item\label{item_first_subcircuit_min_complexity}
            Подсхема $\frS_1$, реализующая все конъюнкции первых $k$ переменных $x_i$ по Лемме%
            ~\ref{lemma_complexity_of_all_conjunctions_of_n_variables}
            со сложностью $L_1 \sim 2^k$ при использовании $q_1 \sim 2^k$ дополнительных входов.

        \item
            Подсхема $\frS_2$, реализующая все конъюнкции последних $(n-k)$ переменных $x_i$ по Лемме%
            ~\ref{lemma_complexity_of_all_conjunctions_of_n_variables}
            со сложностью $L_2 \sim 2^{n-k}$ при использовании $q_2 \sim 2^{n-k}$ дополнительных входов.

        \item
            Подсхема $\frS_3$, реализующая все булевы функции $g \in G_i$ для всех $i \in \ZZ_p$
            по формуле~\eqref{formula_analog_sdnf} со сложностью $L_3 \sim p2^{s+1}$ при использовании
            $q_3 \sim p2^s$ дополнительных входов (см. замечание выше про реализацию всех булевых функций множества $G_i$).
        
        \item
            Подсхема $\frS_4$, реализующая все $n2^{n-k}$ координатных функций $f_{i,j}(\vv x)$,
            $i \in \ZZ_{2^{n-k}}$, $j \in \ZZ_n$, по формуле~\eqref{formula_analog_sdnf_improved}
            со сложностью $L_4 \leqslant pn 2^{n-k}$ при использовании $q_4 = n 2^{n-k}$ дополнительных входов.
            
        \item
            Подсхема $\frS_5$, реализующая булево отображение $f$
            по формуле~\eqref{formula_function_decomposition_by_last_variables}
            со сложностью $L_5 \leqslant n 2^{n-k}$ при использовании $q_5 = n$ дополнительных вдохов.
    \end{enumerate}
    
    Будем искать параметры $k$ и $s$, удовлетворяющие следующим условиям:
    $$
        \left\{
            \begin{array}{lr}
                s = n - 2k  \; , & \\
                k = \lceil n \mathop / \phi(n) \rceil \;, & \text{где $\phi(n)$~--- некоторая растущая функция,} \\
                1 \leqslant s < n   \; , & \\
                1 \leqslant k < n \mathop / 2   \; , & \\
                \frac{2^k}{s} \geqslant \psi(n) \;, & \text{где $\psi(n)$~--- некоторая растущая функция.}
            \end{array}
        \right.
    $$
    В этом случае $p = \lceil 2^k \mathop / s \rceil \sim 2^k \mathop / s$
    и $2^{\lceil n \mathop / \phi(n) \rceil} \geqslant s\psi(n)$,
    откуда следует, что при $\phi(n) \leqslant n \mathop / (\log_2n + \log_2 \psi(n))$ параметры $k$ и $s$ будут удовлетворять
    условиям выше.
    
    Суммируя сложности обратимых подсхем $\frS_1$--$\frS_5$ и количество используемых ими дополнительных входов,
    мы получаем следующие оценки для искомой обратимой схемы $\frS$:
    \begin{gather*}
        L(\frS) \sim 2^k + 2^{n-k} + p2^{s+1} + pn 2^{n-k} + n 2^{n-k}
            \sim 2^k + \frac{2^{n-k+1}}{s} + \frac{n2^n}{s}   \; , \\
        q(\frS) \sim 2^k + 2^{n-k} + p2^s + n2^{n-k} + n \sim 2^k + \frac{2^{n-k}}{s} + n 2^{n-k}  \; .
    \end{gather*}
    
    Следовательно, при $k = \lceil n \mathop / \phi(n) \rceil$ и $s = n - 2k$,
    где $\phi(n) \leqslant n \mathop / (\log_2n + \log_2 \psi(n))$ и $\psi(n)$~--- некоторые растущие функции,
    верны следующие соотношения:
    \begin{gather*}
        L(\frS) \sim 2^{\lceil n \mathop / \phi(n) \rceil} + \frac{2^{n+1}}{n(1-o(1))2^{\lceil n \mathop / \phi(n) \rceil}}
            + \frac{n2^n}{n(1-o(1))} \sim 2^n   \; , \\
        q(\frS) \sim 2^{\lceil n \mathop / \phi(n) \rceil} + \frac{2^n}{n(1-o(1))2^{\lceil n \mathop / \phi(n) \rceil}}
            + \frac{n 2^n}{2^{\lceil n \mathop / \phi(n) \rceil}} \sim \frac{n 2^n}{2^{\lceil n \mathop / \phi(n) \rceil}}  \; .
    \end{gather*}
    
    Поскольку мы описали алгоритм синтеза обратимой схемы $\frS$ для произвольного булева отображения $f$,
    то $L(n,q_0) \leqslant L(\frS) \sim 2^n$, где $q_0 \sim n2^{n - \lceil n \mathop / \phi(n) \rceil}$.
\end{proof}


\section*{Заключение}

В данной работе был рассмотрен вопрос сложности обратимых схем, состоящих из функциональных элементов NOT, CNOT и 2-CNOT.
Была изучена функция Шеннона сложности $L(n, q)$ обратимой схемы, реализующей какое-либо отображение $\ZZ_2^n \to \ZZ_2^n$ из
множества $F(n,q)$, как функции от $n$ и количества дополнительных входов схемы $q$.
Были доказаны нижние и верхние оценки для $L(n, q)$ для обратимых схем, использующих и не
использующих дополнительные входы.
Было показано, что использование дополнительной памяти в обратимых схемах, состоящих из функциональных элементов 
NOT, CNOT и 2-CNOT почти всегда позволяет снизить сложность обратимой схемы, чего нельзя утверждать в общем случае про схемы,
состоящие из классических необратимых функциональных элементов.

При решении задачи синтеза обратимой схемы, реализующей какое-либо отображение, приходится искать компромисс
между сложностью синтезированной схемы и количеством используемых дополнительных входов в схеме.
Направлением дальнейших исследований является более детальное изучение зависимости этих двух величин друг от друга.


\end{document}